\newtheorem{lem}{Lemma} 
\DeclareMathOperator{\tr}{Tr} 						
\renewcommand{\Re}{\operatorname{Re}}
\renewcommand{\Im}{\operatorname{Im}}
\newcommand{\mt}{\mathsf{T}} 						
\newcommand{\ec}{\mathcal{E}}						
\newcommand{\mean}{\mathfrak{M}}                    
\newcommand{\qfim}{F}						        
\newcommand{\fqfi}{\mathfrak{F}}					
\newcommand{\ensemble}{{\{p_l,\ket{\psi_l}\}}} 		
\newcommand{\dyac}[1]{\ket{#1}\!\bra{#1}} 			
\newcommand{\wm}{G}									
\newcommand{\croof}{\widehat{C}}					
\newcommand{\ie}{{i.e.}}
\begin{document}
\title{Generalized-mean {Cram\'er-Rao} Bounds for Multiparameter Quantum Metrology}

\author{Xiao-Ming Lu}
\email{luxiaoming@gmail.com}
\homepage{http://xmlu.me}
\affiliation{Department of Physics, Hangzhou Dianzi University, Hangzhou 310018, China}
\affiliation{Key Laboratory of Quantum Optics, Chinese Academy of Sciences, Shanghai 200800, China}

\author{Zhihao Ma}
\email{mazhihao@sjtu.edu.cn}
\affiliation{School of Mathematical Sciences, Shanghai Jiao Tong University, Shanghai 200240, China}

\author{Chengjie Zhang}
\email{chengjie.zhang@gmail.com}
\affiliation{School of Physical Science and Technology, Soochow University, Suzhou, 215006, China}

\begin{abstract}
In multiparameter quantum metrology, the weighted-arithmetic-mean error of estimation is often used as a scalar cost function to be minimized during design optimization.
However, other types of mean error can reveal different facets of permissible error combinations.
By defining the weighted \(f\)-mean of estimation error and quantum Fisher information, we derive various quantum Cram\'er-Rao bounds on mean error in a very general form and also give their refined versions with complex quantum Fisher information matrices. 
We show that the geometric- and harmonic-mean quantum Cram\'er-Rao bounds can help to reveal more forbidden region of estimation error for a complex signal in coherent light accompanied by thermal background than just using the ordinary arithmetic-mean version.
Moreover, we show that the \(f\)-mean quantum Fisher information can be considered as information-theoretic quantities and is useful in quantifying asymmetry and coherence as quantum resources.
\end{abstract}

\maketitle

\section{Introduction}
The random nature of quantum measurement imposes fundamental limits to estimation error of unknown parameters in quantum systems. 
To reveal these fundamental limits, a variety of lower bounds on estimation error have been developed~\cite{Helstrom1976,Holevo1982,Yuen1973,Braunstein1994,Tsang2011,Tsang2012,Lu2016,Tsang2019e}.
As the most popular error bound, the quantum Cram\'er-Rao bound (QCRB) on unbiased estimator with any quantum measurement has been widely used in quantum metrology~\cite{Giovannetti2004,Giovannetti2006,Giovannetti2011,Paris2009,Lu2015}. 
For multiparameter estimation, the QCRB is given as a matrix inequality that restricts the possible error-covariance matrix of any unbiased estimation strategy, by the inverse of quantum Fisher information (QFI) matrix~\cite{Helstrom1976,Holevo1982,Yuen1973}. 
Unfortunately, this multiparameter QCRB cannot be in general saturated~\cite{Helstrom1976,Holevo1982,Yuen1973,Ragy2016,Szczykulska2016}, meaning that there might not exist an optimal measurement simultaneously minimizing the estimation errors of all parameter of interest. 
Due to Heisenberg's uncertainty principle~\cite{Heisenberg1927}, there would be a trade-off between individual estimation errors, when the optimal measurements for different parameters are not compatible in quantum mechanics. 
Since simultaneously minimizing all estimation errors of individual parameters is generally infeasible, a scalar error is in practice demanded as a cost function for optimization design.
The weighted-arithmetic-mean error is the most commonly used mean error in many previous applications of the QCRB as well as other stronger bounds like the Holevo bound~\cite{Helstrom1976,Yuen1973,Holevo1982,Tsang2011,Gessner2018,Nichols2018,Proctor2018}.

Despite the usefulness of weighted-arithmetic-mean error, many other types of mean error exist and there are no hard-and-fast rules for which mean should be used.
For example, the product of errors, which is equivalent to geometric-mean error, has been widely adopted to formulate uncertainty relations for observables that are incompatible in quantum mechanics~\cite{Heisenberg1927,Robertson1929,Robertson1934,Ozawa2003,Erhart2012,Busch2013,Branciard2013,Lu2014}. 
In fact, different types of mean error can manifest different facets of permissible error combinations for multiparameter estimation. 
This motives us to generalize the arithmetic-mean estimation errors and study their fundamental limits imposed by the random nature of quantum measurement.
To do this, we first define the \(f\)-mean estimation errors, which includes the ordinary arithmetic-mean error, geometric-mean error, and harmonic-mean error as special cases. 
We show that each \(f\)-mean estimation error is bounded from below by an \(f\)-mean version of QCRB with a corresponding \(f\)-mean QFI\@.
We also give a refined \(f\)-mean QCRB, which is tighter than the \(f\)-mean QCRB with the complex QFI matrix defined by Yuen and Lax~\cite{Yuen1973}.
Furthermore, we show that the \(f\)-mean QFIs have monotonicity under quantum operations and thus can be considered as information-theoretic quantities.
We demonstrate that they are useful in quantum resource theory, e.g., in quantifying asymmetry~\cite{Marvian2014} and coherence~\cite{Aberg,Baumgratz2014,Yuan2015,Winter2016,Yu2016,Streltsov2017,Hu2018} as quantum resources. 

\section{Quantum Cram\'er-Rao bound}\label{sec:qcrb}

Let us first introduce the QCRB and QFI, which play a pivot role in quantum parameter estimation~\cite{Helstrom1976,Holevo1982}. 
The task considered here is to estimate an unknown vector parameter \( \theta:={(\theta_1,\theta_2,\ldots,\theta_n)}^\mt \) from observations on a quantum system, where \(\mt \) denotes matrix transposition. 
The state of the quantum system is described by a density operator \(\varrho_\theta \), which depends on the value of \(\theta \).
A quantum measurement can be described by a positive-operator-valued measure \( \{M_y|M_y\geq0, \sum_y M_y=\openone \} \) with \(y\) denoting measurement outcomes and \(\openone \) the identity operator.
Denote the estimator for the \(j\)-th unknown parameter \( \theta_j \) by \( \tilde\theta_j \), which is a map from measurement outcomes \(y \) to estimates for \( \theta_j \).
The estimation error of multiple parameters can be characterized by the error-covariance matrix \( \ec \) defined by its entries
\begin{equation}\label{eq:error_covariance}
	\ec_{jk}:=\int 
	\big(\tilde \theta_j(y)-\theta_j\big)
	\big(\tilde \theta_k(y)-\theta_k\big)
	p(y|\theta)\mathrm{d}y,
\end{equation}
where \( p(y|\theta) := \tr M_y \varrho_\theta \) with \( \tr \) being trace operation is the conditional probability of obtaining a measurement outcome \(y\) for a given true value of \( \theta \).
For any unbiased estimator and any quantum measurement, the estimation error obeys the following QCRB~\cite{Helstrom1976,Yuen1973}:
\begin{equation}\label{eq:qcrb}
	\ec\geq \qfim^{-1},	
\end{equation}
where \(\qfim \) is the so-called QFI matrix~\cite{Liu2019}.
Note that the matrix inequality Eq.~\eqref{eq:qcrb} means that \( \ec - \qfim^{-1} \) is positive semi-definite.

There exist two versions of QFI matrix in the QCRB\@.
The first one is based on the symmetric logarithmic derivative (SLD) operator~\cite{Helstrom1968,Helstrom1967} and the second one is based on the right logarithmic (RLD) operator~\cite{Yuen1973}.
The SLD-based QFI matrix \( \qfim_\mathrm{S} \) is defined by \( {[\qfim_{\mathrm{S}}]}_{jk} := \Re\tr L_j L_k \varrho_\theta \), where \( \Re \) denotes the real part and \(L_j\), the SLD operator for \(\theta_j\), is a Hermitian operator satisfying \( \partial \varrho_\theta / \partial\theta_j = (L_j \varrho_\theta + \varrho_\theta L_j)/2 \).	
The RLD-based QFI matrix \(\qfim_\mathrm{R} \) is defined by 
\( {[F_\mathrm{R}]}_{jk} := \tr R_j^\dagger \varrho_\theta R_k \), where \(R_j\), the RLD operator for \(\theta_j\),  satisfies \( \partial \varrho_\theta / \partial \theta_j = \varrho_\theta R_j \).
The SLD-based QFI matrix is real symmetric while the RLD-based QFI matrix is in general Hermitian.

As the diagonal elements of \(\ec \)---estimation errors of individual parameters---might not be simultaneously minimized, one often use the weighted-mean error \(\tr\wm\ec \) as the cost function to be minimized for optimizing quantum estimation strategies, where the given weight matrix \(\wm \) is real-symmetric and positive.
It is easy to see that the weighted-mean error is bounded as \(\tr \wm \ec \geq \tr \wm \qfim^{-1}\), according to the QCRB\@. 

\section{Generalized-mean QCRB}

To generalize the mean error of estimation, we first define the weighted \(f\)-mean for a positive matrix \(X\) as  
\begin{align}\label{eq:f_mean}
	\mean_{f,\wm}(X) := f^{-1} \left( \tr \wm f(X) \right),
\end{align}
where \(f\) is a real-valued, continuous, and strictly monotonic function on the interval \( (0,+\infty) \) and the weight matrix \(G\) is  real-symmetric and positive semi-definite.
Note that whenever \(f\) is applied on a positive matrix \(X\), it means that 
\( f(X)=U\,\mathrm{diag} \{ f(x_1),f(x_2),\ldots,f(x_n)\} \, U^\dagger \), 
where \(U\) is a unitary matrix diagonalizing \(X\) as \( U^\dagger X U = \mathrm{diag}\{x_1,x_2,\ldots,x_n\} \).
Without loss of generality, we henceforth set the weight matrix \(\wm \) to be normalized, \ie{}, \(\tr \wm = 1\).
The unweighted \(f\)-mean is given by substituting \( \wm = I_n/n \) into Eq.~\eqref{eq:f_mean} and will be simply denoted by \( \mean_f(X)\), where \(I_n\) is the \(n\times n\) identity matrix. 
It is worthy to mention that \( \mean_{f,G}(\ec) \) can be written in the form of the classical weighted \(f\)-mean~\cite{Hardy1934} of the eigenvalues \(x_j\) of \( X \) as
\begin{equation}\label{eq:classical}
    \mean_{f,G}(X) = f^{-1}(\mathbb{E}[f(x_j)]),     
\end{equation}
where the expectation \(\mathbb E\) is taken regarding \(x_j\) with the probabilities \( p_j = \tr G P_j\) and \(P_j\) is the eigen-projection of \(X\) corresponding to the eigenvalue \(x_j\).

The weighted \(f\)-mean error of estimation is given by \( \mean_{f,G}(\ec) \).
For simplicity, we will use \(f\)-mean error to denote both the weighted and unweighted versions.
This \( f \)-mean error includes as special cases the arithmetic, geometric and harmonic mean error, which we will discuss in detail later. 
For the case of single parameter estimation (\( n=1 \)), \( \mean_{f,G}(\ec) \) is always reduced to the ordinary mean-square error, no matter what the function \(f\) is taken to be.  

We now derive the generalized QCRBs on the \(f\)-mean error.
Assuming that the function \(f\) is either an operator monotone or anti-monotone~\cite{Hiai2014}, we give the \(f\)-mean QCRB as follows (see Appendix~\ref{app:proof} for a detailed proof): 
\begin{align} \label{eq:fqcrb} 
    \mean_{f,G}(\ec) \geq \frac1{\mean_{f\circ\zeta,G}{(\qfim)}}
\end{align}
where \(\zeta:x\mapsto 1/x\) is the reciprocal function.
Note that a real-valued continuous function \(f\) is called operator monotone if \( f(A) \geq f(B) \) always holds, and is called operator anti-monotone if \( f(A) \leq f(B) \) always holds, whenever the two Hermitian operators  \(A\) and \(B\) satisfy \( A \geq B \geq 0 \).
Furthermore, assuming that the weighted \(f\) mean has homogeneity, \ie{}, \( \mean_{f,G}(t X) = t\,\mean_{f,G}(X) \) holds for any positive number \(t\) and any positive matrix \(X\), we can get a classical scaling \( \ec \geq 1 / \nu \,\mean_{f\circ\zeta,G}{(\qfim)} \) with respect to the number \( \nu \) of repetition of the experiment, due to the additivity of the QFI matrix.

To establish concrete \(f\)-mean QCRBs with the classical scaling \( 1 / \nu \), we need to find the operator monotone or anti-monotone functions that result in homogeneous \(f\) means.
The reader is directed to Hiai and Petz~\cite[Chapter 4]{Hiai2014} for discussions on operator monotone functions and to  Hardy, Littlewood, and P\'{o}lya~\cite[Chapter III]{Hardy1934} for discussions on the homogeneity of the \(f\) mean.
In short, the functions \(x\mapsto x^s\) for \( s \in [-1,1] \setminus \{0\} \) and  \( x\mapsto\ln(x) \) are  either operator monotone or anti-monotone (see Appendix~\ref{app:opmonotone}) and give homogeneous \(f\)-means (see Appendix~\ref{app:homogeneity});
Therefore, they are qualified for the \(f\)-mean QCRBs.
By a little abuse of notation, we adopt the convention of the generalized mean~\cite{Hardy1934} to denote  by \( \mean_{s,G} \) the weighted generalized mean for \(f:x\mapsto x^s \) with \( s\in[-1,1]\setminus \{0\} \) and specifically set \(\mean_{0,G} \) to the case of \( f:x\mapsto \ln(x) \) as \( \lim_{s \to 0} \mean_{s,G} = \mean_{0,G} \).
Also, we will use \( \mean_{s} \) for the corresponding unweighted \(f\) means.
With this notation, the generalized QCRB reads 
\begin{equation}\label{eq:fqrcb_s}
	\mean_{s,G}(\ec)\geq \nu^{-1}\mean_{-s,G}{(F)}^{-1}
\end{equation}
with \( s\in[-1,1] \).
The generalized QCRBs in Eq.~\eqref{eq:fqcrb} and Eq.~\eqref{eq:fqrcb_s} are our first main result.

We here briefly discuss the \(f\)-mean estimation errors and their properties.
Since the \(f\)-means have the classical representation as Eq.~\eqref{eq:classical}, they inherit the comparability~\cite[see Theorem 16]{Hardy1934}, namely, 
\begin{equation}\label{eq:comparability}
	\mean_{r,G} \leq \mean_{s,G} \mbox{ for }  -1 \leq r \leq s \leq 1.
\end{equation}
This comparability is a property of the \(f\)-means themselves and thus can be applied to both the \(f\)-mean estimation errors \(\mean_{s,G}(\ec)\) and the \(f\)-mean QFIs \( \mean_{s,G}(\qfim) \).
Three primary instances of the generalized means are the weighted arithmetic, geometric, and harmonic means, which are \( \mean_{1,G} \), \( \mean_{0,G} \), and \( \mean_{-1,G} \), respectively.
We list in Tab.~\ref{tab:1} the corresponding unweighted versions of \(f\)-mean estimation errors and the \(f\)-mean QFIs giving lower bounds on the estimation errors.
The difference between these three \(f\)-mean errors becomes obvious in the regions where the eigenvalues of the error-covariance matrix have a large fluctuation, e.g., one of the eigenvalues is very small while the others are considerably large, as shown in Fig.~\ref{fig:error}.

\begin{table}
\caption{\label{tab:1} Three primary instances of the unweighted \(f\)-mean errors \(\mean_s(\ec) \) and their reciprocal mean-QFIs \(\mean_{-s}(\qfim)\).
Here, \(n\) is the number of parameters to be estimated.}
\begin{ruledtabular}
\begin{tabular}{llll}
	\(s\)& \( f(x) \) & \(\mean_{s}(\ec)\) & \(\mean_{-s}(\qfim)\) \\
	\hline
	\(1\)& \( x \) & \( \tr\ec/n \) & \( n/\tr \qfim^{-1} \) \\
	\(0\)& \( \ln x \) & \( {(\det\ec)}^{1/n} \) & \( {(\det \qfim)}^{1/n} \)	\\
	\(-1\)& \( 1/x \) & \(  n/\tr\ec^{-1} \) & \( \tr \qfim/n \)
\end{tabular}
\end{ruledtabular}
\end{table}

\begin{figure}
	\includegraphics[width=1\linewidth]{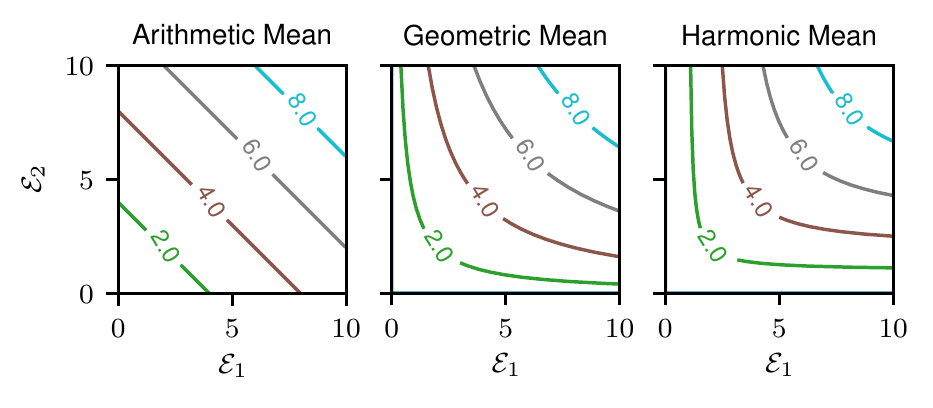}
	\caption{\label{fig:error} 
	The (unweighted) arithmetic, geometric, and harmonic means as functions of individual errors.
	Here, \(\mathcal{E}_1\) and \( \mathcal{E}_2 \) are the eigenvalues of the error-covariance matrix \(\mathcal{E}\).
	}
\end{figure}

The \(f\)-mean QCRBs in Eqs.~\eqref{eq:fqcrb} and~\eqref{eq:fqrcb_s} holds for both the SLD- and RLD-based QFI matrices.
Nevertheless, only the real part of \( f(\qfim^{-1}) \) is relevant to the \(f\)-mean QFI\@.
To see this, note that \(f(\qfim^{-1})\) is Hermitian so that its imaginary part \( \Im f(\qfim^{-1}) \) is anti-symmetric.
Since the trace of the product of a symmetric matrix and  an anti-symmetric matrix must vanish, we get \( \tr G \Im f(\qfim^{-1}) = 0 \).
As shown by Yuen and Lax~\cite{Yuen1973}, and also by Holevo~\cite{Holevo1982}, via some elaborate mathematical manipulations, the imaginary part of the RLD-based QFI matrix in fact can be used to establish a tighter bound on the weighted-arithmetic-mean estimation error than the ordination QCRB\@.
We shall generalize these result to the \(f\)-mean estimation error in what follows.  

We use Holevo's approach~\cite{Holevo1982} to refine the lower bound on the \(f\)-mean estimation error when the error-covariance matrix \( \ec \) is known to be bounded by a Hermitian matrix (\ie{}, the RLD-based QFI matrix). 
To do this, first note that for a real-symmetric matrix \(A\) and a Hermitian matrix \(B\) satisfying \( A \geq B \), it holds that~\cite[Chapter 6]{Holevo1982}
\begin{equation}\label{seq:lemma}
    \tr A \geq \tr \Re B + \lVert \Im B \rVert_1,
\end{equation}
where \(\lVert O \rVert_1 := \tr\sqrt{O^\dagger O}\) is the Schatten 1-norm of an operator \( O \).
Suppose that \(f\) is an operator monotone function.
Then, it follows from the ordinary QCRB Eq.~\eqref{eq:qcrb} and the non-negativity of the weight matrix \(G\) that 
\begin{equation}
    \sqrt G f(\ec) \sqrt G \geq \sqrt G f(\qfim^{-1}) \sqrt G.
\end{equation}
Substituting \( A = \sqrt G f(\ec) \sqrt G \) and \( B= \sqrt G f(\qfim^{-1}) \sqrt G \) into Eq.~\eqref{seq:lemma} and then applying \(f^{-1}\), we get 
\begin{align}\label{eq:refined}
	&\mean_{f,G}(\ec) \geq \mathfrak{R}_{f,G}(\qfim) := \nonumber \\
	& \ f^{-1}\Big(
        \tr G \Re f\left(\qfim^{-1}\right) 
        + \lVert \sqrt{G} \Im f(\qfim^{-1}) \sqrt{G} \rVert_1 
	\Big).
\end{align}
It is easy to see that the above inequality still holds when \(f\) is an operator anti-monotone function.

For the concrete \(f\) functions considered in this work, \ie{}, \( f: x \mapsto x^s \) with \( s\in [-1,1]\setminus \{0\} \) and \( f: x \mapsto \ln x \), the above-mentioned refined lower bound also has the classical scaling \(\nu^{-1}\) with the number \( \nu \) of repetitions of the experiments. 
To see this, note that when the experiment was repeated \( \nu \) times, the QFI matrix is given by \( \nu \qfim \) due to the additivity of the QFI matrix. 
Substituting \( f(\nu^{-1}\qfim^{-1}) = \nu^{-s} \qfim^{-s} \) for the case of \( f(x) = x^s \) and \( \ln(\nu^{-1} \qfim^{-1}) = (\ln \nu^{-1}) I + \ln(F^{-1}) \) for the case of \( f(x) = \ln x \) into Eq.~\eqref{eq:refined}, we obtain
\begin{align}\label{eq:refined2}
	\mean_{f,G}(\ec) 
	\geq \nu^{-1} \mathfrak{R}_{f,G}(\qfim).
\end{align}
The refined bound in Eqs.~\eqref{eq:refined} and~\eqref{eq:refined2} with the RLD-based QFI matrix is the second main result of this work.
This bound is tighter than the \(f\)-mean QCRB Eq.~\eqref{eq:fqcrb} with the RLD-based QFI matrix, because the term \( \lVert \sqrt{G} \Im f(\qfim^{-1}) \sqrt{G} \rVert_1 \) is nonnegative and will be reduced to Eq.~\eqref{eq:fqrcb_s} for Hermitian QFI matrices.

\section{Application to the estimation of a coherent signal}

Now, let us consider the estimation of a complex coherent signal \(\mu\) accompanied by thermal background light.
Following Helstrom~\cite{Helstrom1976}, the parametric family of density operators is given by the Glauber–Sudarshan \(P\) representation
\begin{equation}
	\varrho_\theta = \frac{1}{\pi \eta}
	\int \exp\left( - |\alpha - \mu|^2 / \eta \right) \dyac{\alpha}\mathrm{d}^2\alpha,
\end{equation} 
where \( \eta \) is the mean number of photons induced by the background, \(\ket \alpha \) is a coherent state, and \( \mu \) is a complex number.
Let us take the real and imaginary parts of \( \mu \) as the parameters \(\theta_1\) and \(\theta_2\) to be estimated, i.e., 
\begin{equation}
	\theta_1 = \Re \mu \quad \mbox{and} \quad \theta_2 = \Im \mu.
\end{equation}
The QFI matrices based on SLD and RLD have already been given in Ref.~\cite{Helstrom1976} and Ref.~\cite{Yuen1973}, respectively, that is,
\begin{align}
	\qfim_\mathrm{S} & = \frac{4}{2 \eta + 1}
	\begin{pmatrix}
		1 & 0 \\
		0 & 1
	\end{pmatrix}, \\ 
	\qfim_\mathrm{R} & = \frac{1}{\eta (\eta + 1)}
	\begin{pmatrix}
		2 \eta + 1	& -i\\
		i 		& 2 \eta + 1
	\end{pmatrix}.
\end{align}

We now calculate the \(f\)-mean versions of QCRB\@.
The unweighted \(f\)-mean QFI only depends on the eigenvalues of the QFI matrix, which are \( \{4 / ( 1 + 2 \eta ), 4 / ( 1 + 2 \eta ) \} \) for \(F_\mathrm{S}\) and \( \{ 2 / ( 1 + \eta ), 2 / \eta \} \) for \(F_\mathrm{R}\).
Since the eigenvalues of \( F_\mathrm{S} \) are the same, the unweighted \( f \)-mean SLD-based QFIs 
\begin{equation}
	\mean_s(\qfim_\mathrm{S}) = \frac{4}{1 + 2 \eta} \quad \forall s \in [-1,1].  
\end{equation}
For the RLD-based QFI, the \(f\)-mean QFI is given by
\begin{equation}
    \mean_s(\qfim_\mathrm{R}) =
    	{\left[
        	\frac12 {\left( \frac{2}{1 + \eta} \right)}^s + 
        	\frac12 {\left( \frac{2}{\eta} \right)}^s
        \right]}^{1/s}
\end{equation}
for \( s \in [-1,1] \setminus \{0\} \) and \( \mean_0(\qfim_\mathrm{R})  = 2 / {\sqrt{\eta (\eta + 1)}} \).
Due to the comparability Eq.~\eqref{eq:comparability} of the generalized means, we have \( \mean_s(\qfim_\mathrm{R}) \geq \mean_{-1}(\qfim_\mathrm{R})\).
Moreover, it is easy to show that \(\mean_{-1}(\qfim_\mathrm{R})\) equals to \( \mean_s(\qfim_\mathrm{S}) \), where the latter is in fact independent of \(s\).  
Therefore, we get \( \mean_s(\qfim_\mathrm{R}) \geq \mean_{s}(\qfim_\mathrm{S}) \) for any \(s\in[-1,1]\), meaning that the \(f\)-mean QCRB Eq.~\eqref{eq:fqrcb_s} with the SLD  gives the tighter bound for this case than that with the RLD\@.

Next, we calculate the refined \(f\)-mean QCRB with RLD\@.
Note that the inverse of the RLD-based QFI matrix has the following eigenvalue decomposition:
\begin{align}
    \qfim_\mathrm{R}^{-1}  
    &= \frac{\eta}{2} \frac{I_2 + \sigma_2}{2} + \frac{\eta + 1}{2} \frac{I_2 - \sigma_2}{2}, 
    \label{seq:spectral_decomposition}
\end{align}
where 
\(
    \sigma_2 = \begin{pmatrix}
    0 & -i \\
    i & 0
    \end{pmatrix}
\) 
is the Pauli-\(y\) matrix.
Since \( (I_2 \pm \sigma_2)/2 \) are projections, we get
\begin{equation}
    {(F_\mathrm{R}^{-1})}^s 
    = {\left( \frac{\eta}{2} \right)}^s \frac{I_2 + \sigma_2}{2}
    + {\left( \frac{\eta + 1}{2} \right)}^s \frac{I_2 - \sigma_2}{2}.
\end{equation}
By noting that the matrices \( I_2 \) and \(\sigma_2\) are purely real and imaginary, respectively, we get 
\begin{align}
    \Re {(F_\mathrm{R}^{-1})}^s 
    &= \frac12\left[
        {\left( \frac{\eta}{2} \right)}^s 
        + {\left( \frac{\eta + 1}{2} \right)}^s 
        \right] I_2, \\
    i \Im {(F_\mathrm{R}^{-1})}^s 
    &= \frac12\left[
        {\left( \frac{\eta}{2} \right)}^s 
        - {\left( \frac{\eta + 1}{2} \right)}^s 
        \right] \sigma_2.
\end{align}
According to the refined bound Eq.~\eqref{eq:refined} with \(G=I_2/2\), the unweighted \(f\)-mean error is then bounded as
\begin{align}
    &\mean_s(\ec) \geq 
    {\left[ 
        \frac12\tr{({\Re F_\mathrm{R}}^{-1})}^s 
        + \frac12 \left\lVert  {(\Im F_\mathrm{R}^{-1})}^s  \right\rVert_1
    \right]}^{1/s}  \nonumber \\
    &= {\left[ \frac12 {\left( \frac{\eta}{2} \right)}^s 
    + \frac12 {\left( \frac{\eta + 1}{2} \right)}^s 
    + \frac12 \left|
        {\left( \frac{\eta}{2} \right)}^s 
        - {\left( \frac{\eta + 1}{2} \right)}^s 
        \right|
    \right]}^{1/s} \nonumber \\
    & = \begin{cases}
        \frac{\eta}{2}, & -1 \leq s < 0\\
        \frac{\eta + 1}{2}, & 0 < s \leq 1.
    \end{cases}
\end{align}
For the case of \(s=0\), we straightforwardly calculate the refined bound as follows.  
Due to the eigenvalue decomposition Eq.~\eqref{seq:spectral_decomposition}, it follows that
\begin{align}
    \ln f(F_\mathrm{R}^{-1}) 
    &= \ln \left( \frac{\eta}{2} \right ) \frac{I_2 + \sigma_2}{2} 
    + \ln \left( \frac{\eta + 1}{2} \right) \frac{I_2 - \sigma_2}{2} \nonumber \\
    &= \frac{I_2}{2} \ln \frac{\eta(\eta+1)}{4}  + \frac{\sigma_2}{2} \ln\frac{\eta}{\eta+1},
\end{align}
from which we get \( \mean_0(\ec) \geq (\eta + 1)/2 \).

\begin{figure}
    \includegraphics[]{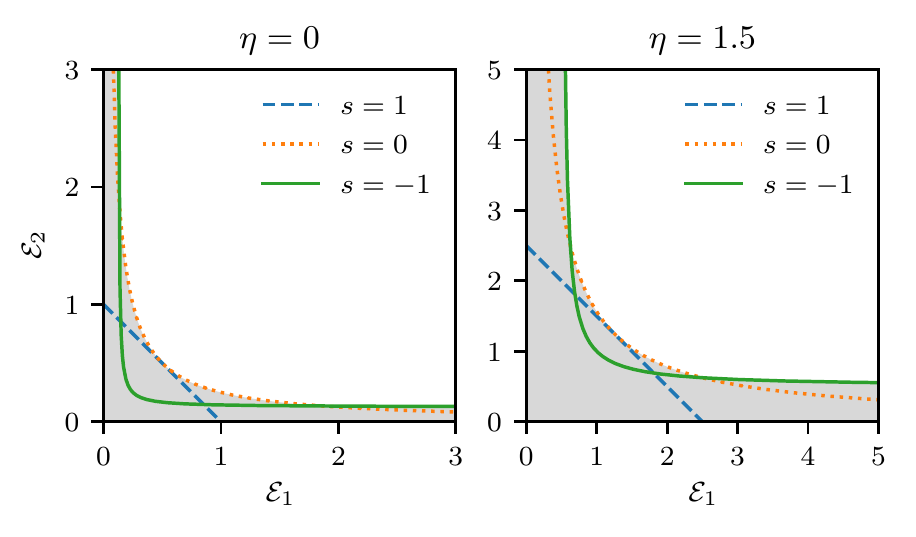}
    \caption{\label{fig:example} 
    Permissible combinations of the eigen-errors of estimating a complex coherent signal accompanied by thermal background light.
    Here \( \ec_1 \) and \( \ec_2 \) are the eigenvalues of the error-covariance matrix and \( \eta \) is the mean number of photons induced by the background.
    Only the regions above the curves are permissible by the corresponding unweighted \( f \)-mean QCRBs with different values of \( s \).
    The grey color denotes the forbidden region optimized over \(s\).
    }
\end{figure}

In this example, we can see that the refined RLD-based \(f\)-mean QCRB is tighter than the SLD-based one, when \( 0 \leq s \leq 1\), and looser when \( -1 \leq s < 0\). 
In short, the \(f\)-mean error of estimating a complex coherent signal accompanied by  thermal background light is bounded from below by
\begin{equation} \label{eq:synthesis}
    \mean_s(\ec) \geq 
    \begin{cases}
        \frac{2\eta + 1}{4}, & -1 \leq s < 0\\
        \frac{\eta + 1}{2}, & 0 \leq s \leq 1.
    \end{cases}
\end{equation}
We plot in Fig.~\ref{fig:example} the permissible combinations of the eigen-errors---the eigenvalues of the error-covariance matrix---through the \(f\)-mean QCRBs derived in this work.
It can be seen from Fig.~\ref{fig:example} that the geometric-mean and harmonic-mean QCRBs can reveal more forbidden region than the ordinary arithmetic-mean QCRB, when one of the eigen-errors is small and the other is considerably large.

To reveal more forbidden regions of error combinations, we can optimize over the \(f\) functions (i.e., \(s \in [-1,1]\) in the above example) and the weight matrices \(G\). 
In fact, the optimization with respect to \(s\) can be substantially simplified when the lower bounds on \(f\)-mean estimation error are independent of \(s\) within an interval of \(s\), due to the comparability of the \(f\)-mean error as shown in Eq.~\eqref{eq:comparability}.
Let us take the bounds given in Eq.~\eqref{eq:synthesis} as an example.
When \(-1 \leq s < 0\), \(\mean_s(\ec)\) have the common lower bound \( (2 \eta + 1) / 4 \).
From the comparability Eq.~\eqref{eq:comparability}, it can be seen that \( \mean_s(\ec) \geq \mean_{-1}(\ec) \geq (2 \eta + 1) / 4 \), implying that all error-covariance matrices \(\ec\) permissible by the harmonic-mean QCRB must be permitted by the other \(f\)-mean QCRBs with \( s \in (-1,0) \).
Analogously, it can be seen that the geometric-mean QCRB is tighter than other \(f\)-mean QCRBs with \( s\in (0,1] \). 
Therefore, in Fig.~\ref{fig:example} the union of the forbidden regions by the harmonic-mean QCRB and that by the geometric-mean QCRB has already excluded the maximal regions of eigen-error combination forbidden by Eq.~\eqref{eq:synthesis}.

\section{\texorpdfstring{\(f\)}{f}-mean QFI as information-theoretic quantity}\label{sec:fqfi}

Besides giving lower bounds on the \(f\) mean error for quantum multiparameter estimation, the \(f\)-mean QFIs can also be considered as information-theoretic quantities and may have applications on quantum information field.
For this purpose, let us treat the QFI matrix and the \(f\)-mean QFI as functions of the parametric density operators. 
We thereby denote by \( \qfim(\varrho_\theta) \) the QFI matrix of \( \varrho_\theta \) and write 
\begin{equation}\label{eq:fqfi}
	\fqfi_{f,G}(\varrho_\theta) := \mean_{f,G}(\qfim(\varrho_\theta))	
\end{equation}
as the \(f\)-mean QFI of \( \varrho_\theta \).

We here demonstrate that, like the ordinary QFI matrix, the \(f\)-mean QFIs are also monotonically non-increasing under quantum operation.
Let \( \Phi \) denote a quantum operation, which is mathematically described by a completely positive and trace-preserving linear map on  density operators.
It is known that the QFI matrix itself has the monotonicity under quantum operations~\cite[Theorem 7.34]{Hiai2014}, \ie{}, \( F(\Phi(\varrho_\theta)) \leq F(\varrho_\theta) \).
Now, suppose that \(F(\varrho_\theta)\) and \(F(\Phi(\varrho_\theta))\) are both non-degenerate.
It follows that \( F{(\varrho_\theta)}^{-1} \leq F{(\Phi{(\varrho_\theta)})}^{-1} \), for the inverse function is operator anti-monotone.
Since \(f\) is an operator monotone/anti-monotone function, it can be shown that 
\begin{align}
    f^{-1} \big(\tr G f(F{(\varrho_\theta)}^{-1}) \big) 
    \leq 
    f^{-1} \big(\tr G f(F{(\Phi{(\varrho_\theta)})}^{-1}) \big).
\end{align}
According to the definition in Eq.~\eqref{eq:fqfi}, this is equivalent to
\begin{equation}\label{eq:monotonicity}
    \fqfi_{f,G}(\Phi(\varrho_\theta)) \leq \fqfi_{f,G}(\varrho_\theta),
\end{equation}
which is the monotonicity of the \(f\)-mean QFI\@.

The monotonicity under quantum operations is an important property of information-theoretic quantities; 
This makes the \(f\) mean QFIs useful in quantum resource theory~\cite{Chitambar2019,Gour2008,Streltsov2017,Winter2016,Theurer2017,Streltsov2017a}.
The essentials of quantum resource theory are \emph{free states} and \emph{free operations}.
A resource measure is a function of quantum states that is monotonically non-increasing under free operations and vanishes for all free states~\cite{Chitambar2019}.
We show in the following that the \(f\)-mean QFIs can be used as resource measures for asymmetry~\cite{Marvian2014} and coherence~\cite{Baumgratz2014}.

Asymmetry measure was proposed by Marvian and Spekkens to quantify how much a symmetry of interest is broken for a given quantum state~\cite{Marvian2014}.
Following Ref.~\cite{Marvian2014}, the symmetry is described by a group \(G\), the free states are taken to be the symmetric states that are invariant under the action of all group elements in \( G \), and the free operations are taken to be the symmetric quantum operations \( \Phi \) that satisfy \( \Phi(U(g)\rho {U(g)}^\dagger) = U(g)\Phi(\rho){U(g)}^\dagger \) for all quantum states \( \rho \) and all \( g\in G \), where \( U(g) \) is the unitary representation of \( g \).
Now we consider the symmetry described by a Lie group whose elements are parametrized by \( \theta \in \Theta \subseteq \mathbb{R}^n \).
It can be shown that  \( \fqfi_{f,G}(U(g_\theta)\rho {U(g_\theta)}^\dagger) \) vanishes for symmetric states and monotonically non-increasing under symmetric quantum operations due to the monotonicity of the \(f\)-mean QFI\@. 
Thus, this quantity can be viewed as an asymmetry measure.
A potential application of this type is the scenario of estimating the angles of a collective SU{(2)} rotation on the spins of atom, e.g., see Ref.~\cite{Hyllus2012,Toth2012,Ma2011a}.

Besides, the \(f\)-mean QFIs can also be used to quantify quantum coherence~\cite{Aberg,Baumgratz2014,Streltsov2017,Hu2018}.
The quantum resource theory of coherence is formulated for a fixed orthonormal basis \( \{|j\rangle \} \), which will be called the reference basis. 
The free states are incoherent states whose density matrices are diagonal with the reference basis.
The definition of free operations, however, is not unique, leading to different frameworks of quantifying coherence~\cite{Streltsov2017,Hu2018}. 
In the seminal work by Baumgratz, Cramer, and Plenio (BCP)~\cite{Baumgratz2014}, the free operations are given by the incoherent Kraus operators \( K_l \), which satisfy the requirement that \( K_l \rho K_l^\dagger / \tr(K_l \rho K_l^\dagger) \) are incoherent whenever \( \rho \) is incoherent.
A nonnegative function \(C\) of \( \rho \) is said to be a coherence measure in the BCP framework, if it vanishes only when \( \rho \) is incoherent and possesses the strong monotonicity and convexity. 
The strong monotonicity means that \( C(\rho)\geq \sum_l p_l C(\rho_l) \) with \( p_l = \tr(K_l \rho K_l^\dagger) \) and \( \rho_l = K_l \rho K_l^\dagger / p_l \) holds for any set \( \{K_l\} \) of incoherent Kraus operators.
The convexity means that \( \sum_l p_l C(\rho_l)\geq C(\sum_l p_l \rho_l) \) holds for any probability distribution \( \{p_l\} \) and density operators \( \rho_l \).

To account for superpositions among all reference basis of a \(n\)-dimensional quantum system through the \(f\)-mean QFI, we consider the following parametric density operator 
\begin{equation}\label{eq:densityOperator}
	\varrho_\theta = 
    \exp\Big( i \sum_{j=1}^n \theta_j Z_j \Big) 
    \rho 
    \exp\Big( -i \sum_{j=1}^n \theta_j Z_j \Big), 
\end{equation}
where \( Z_j \)'s are a set of \(n\) Hermitian operators that are all diagonal with the reference basis and satisfy \( \tr Z_j Z_k = \delta_{jk} \).
Such \(Z_j\)'s are commuting with each other.
In fact, at most \( n - 1 \) independent parameters can be sensed into an \(n\)-dimensional quantum system by the commuting generators, for a global phase transformation does not affect the density operators.
Consequently, the QFI matrix about \(n\) parameters must be degenerated. 
To make the \(f\)-mean QFI \( \fqfi_s \) usable in quantifying coherence, it must require \( s \in (0,1] \).
In this work, we focus on the arithmetic-mean QFI \( \fqfi_1 \).

Intuitively, the superposition between the basis states is a necessary resource for estimating the unknown parameters imprinted by \(Z_j\)'s, for incoherent states are invariant under the sensing transformation \( \exp( i \sum_j \theta_j Z_j ) \).
We show in what follows that the convex roof of the arithmetic-mean QFI, as defined in the following, is a coherence measure in the BCP framework~\cite{Baumgratz2014}:
\begin{align}\label{eq:cm}
   	\croof(\rho) &:= \min_\ensemble \sum_l p_l \fqfi_1\left(e^{i\sum_j \theta_j Z_j} \dyac{\psi_l} e^{-i\sum_j \theta_j Z_j}\right) \nonumber \\
   	&= \frac{4}{n} \Big(1- \max_\ensemble \sum_l p_l\sum_{j=1}^n \lvert\braket{j|\psi_l}\rvert^4\Big),
\end{align}
where the minimization in the first line is taken over all ensemble decompositions \( \ensemble \) implementing \(\rho\) as \( \rho=\sum_l p_l \dyac{\psi_l} \) and \( \ket j \) for \(j=1,2,\ldots,n\) are the state vectors in the reference basis.
Analogous to the nomenclature for the \emph{entanglement of formation}~\cite{Horodecki2009}, we can call \( \croof(\rho) \) the \emph{arithmetic-mean QFI of formation}.

We first prove the equality in Eq.~\eqref{eq:cm}, for which the following Lemma is needed.

\begin{lem}\label{prop:invariance}
	The unweighted \(f\)-mean QFI for \(n\) parameters sensed by the commuting generators \(Z_j\)'s is the same as that sensed by another set of commuting generators given by \( Z_j' = \sum_{k=1}^n S_{jk} Z_k \), where \(S\) is an arbitrary \(n \times n\) orthogonal matrix.
\end{lem}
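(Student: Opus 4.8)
The plan is to track how the QFI matrix changes when the generators are rotated, and to observe that an orthogonal rotation acts on the QFI matrix by a \emph{similarity}, hence isospectrally; since the unweighted \(f\)-mean QFI depends on the QFI matrix only through its spectrum, it is left invariant. Throughout I would work at \(\theta=0\), where the two parametric families agree at \(\varrho_0=\rho\). This costs nothing: for a family \(\varrho_\theta = e^{i\Theta}\rho\, e^{-i\Theta}\) with \(\Theta=\sum_j\theta_j Z_j\) generated by mutually commuting Hermitian operators one has \(\partial_j\varrho_\theta = i[Z_j,\varrho_\theta] = e^{i\Theta}\big(i[Z_j,\rho]\big)e^{-i\Theta}\), and the SLD (resp.\ RLD) at \(\theta\) is \(e^{i\Theta}L_j e^{-i\Theta}\) (resp.\ \(e^{i\Theta}R_j e^{-i\Theta}\)), so both \(\qfim_{\mathrm S}\) and \(\qfim_{\mathrm R}\) come out independent of \(\theta\) for such a family, and the same holds for the primed family since the \(Z'_j\) are again diagonal in the reference basis hence commuting.

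First I would relate the logarithmic derivatives of the two families. The defining relations \(i[Z_j,\rho] = \tfrac12(L_j\rho+\rho L_j)=\rho R_j\) for the unprimed generators are linear in \(Z_j\); since the primed family obeys \(\partial_j\varrho'_\theta\big|_{\theta=0}=i[Z'_j,\rho]=\sum_k S_{jk}\, i[Z_k,\rho]\), the operators \(L'_j:=\sum_k S_{jk}L_k\) and \(R'_j:=\sum_k S_{jk}R_k\) solve the corresponding defining relations for the primed family. (If \(\rho\) is rank-deficient the logarithmic derivatives are fixed only up to terms in the kernel of \(\rho\), but the linear combinations above furnish one valid choice, and the QFI matrices do not depend on the choice.)

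Next, substituting into the definitions of the QFI matrices and using that \(S\) is real, one finds \([\qfim'_{\mathrm S}]_{jk}=\Re\tr(L'_j L'_k\rho)=\sum_{a,b}S_{ja}S_{kb}\,\Re\tr(L_aL_b\rho)=[S\qfim_{\mathrm S}S^{\mt}]_{jk}\), and likewise \([\qfim'_{\mathrm R}]_{jk}=\tr\big((R'_j)^{\dagger}\rho\, R'_k\big)=\sum_{a,b}S_{ja}S_{kb}\,\tr(R_a^{\dagger}\rho\, R_b)=[S\qfim_{\mathrm R}S^{\mt}]_{jk}\); so \(\qfim'=S\qfim S^{\mt}\) for both QFI matrices. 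The conceptual reason is simply that \(\sum_j\theta_j Z'_j=\sum_k(S^{\mt}\theta)_k Z_k\), i.e.\ the primed family is the unprimed one reparametrized by the rotation \(\theta\mapsto S^{\mt}\theta\), under which the Fisher information transforms as a covariant symmetric tensor with Jacobian \(S^{\mt}\).

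Finally, orthogonality enters decisively: \(S^{\mt}=S^{-1}\), so \(\qfim'=S\qfim S^{-1}\) is \emph{similar} to \(\qfim\) and hence has the same eigenvalues, with multiplicities. By the classical representation Eq.~\eqref{eq:classical} with \(G=I_n/n\), the unweighted \(f\)-mean \(\mean_f(X)=f^{-1}\big(\tfrac1n\sum_j f(x_j)\big)\) is a symmetric function of the eigenvalues \(x_j\) of \(X\) alone; therefore \(\mean_f(\qfim')=\mean_f(\qfim)\), that is, \(\fqfi_f(\varrho'_\theta)=\fqfi_f(\varrho_\theta)\), which is the assertion. I expect the only delicate points to be (i) keeping clear that the claim hinges on \(\qfim\mapsto S\qfim S^{\mt}\) being a similarity rather than a mere congruence — it would fail for a general invertible \(S\), for which not even \(\tr\qfim\) is preserved — and (ii) the short verification of the \(\theta\)-independence of the unitary family's QFI, which is what legitimises evaluating everything at \(\theta=0\).
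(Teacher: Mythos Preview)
Your proof is correct and follows essentially the same route as the paper: both arguments establish that the QFI matrix transforms as \(F\mapsto SFS^{\mt}\), observe that for orthogonal \(S\) this is a similarity and hence isospectral, and conclude that the unweighted \(f\)-mean---being a function of the spectrum only---is unchanged. The only difference is granularity: the paper simply cites the transformation rule for \(F\) under the reparametrization \(\theta\mapsto S\theta\) as known and then identifies that reparametrization with the change of generators, whereas you derive \(F'=SFS^{\mt}\) from scratch via the linear transformation of the logarithmic derivatives and add the careful checks on \(\theta\)-independence and the rank-deficient case.
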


\begin{proof}
    It is known that the QFI matrix is transformed as \( F \mapsto S F S^\mt \) under an orthogonal transformation \( \theta \mapsto \theta' = S \theta \) of unknown parameters. 
    Since the unweighted \(f\)-mean QFIs depend only on the eigenvalues of the QFI matrix, which does not change under orthogonal transformations, we have \( \mean_{f}(S \qfim S^\mt ) = \mean_{f}(\qfim) \).
    On the other hand, the orthogonal transformation performed on the unknown vector parameter can be moved to the set of generators, as
    \begin{equation}
        \sum_{k=1}^n \theta'_k Z_k 
        = \sum_{j,k=1}^n S_{kj} \theta_j Z_k 
        = \sum_{j=1}^n \theta_j Z'_j
    \end{equation}
    with \( Z_j' = \sum_{k=1}^n S_{kj} Z_k \).
    We thus have proved the above Lemma.
\end{proof}

According to the definition of arithmetic-mean QFI, we have \( \fqfi_1(\varrho_\theta) = (1/n) \tr \qfim(\varrho_\theta) \).
For pure states, it can be shown that
\begin{align}
    \fqfi_1 &\left( 
        e^{i\sum_j \theta_j Z_j} \dyac{\psi_l} e^{-i\sum_j \theta_j Z_j} 
    \right) \nonumber \\
    & \quad = \frac4n \sum_{j=1}^n \left( \braket{\psi_l|Z_j^2|\psi_l} - \braket{\psi_l|Z_j|\psi_l}^2 \right).
    \label{seq:fqfi1}
\end{align}
Due to Lemma~\ref{prop:invariance}, \( \fqfi_1 \) is invariant under the transformation \( Z_j \to Z'_j = \sum_k S_{jk} Z_k \), so we can always choose \( Z_j = \dyac j \).
Substituting \(Z_j = \dyac j\) into Eq.~\eqref{seq:fqfi1}, we get the equality in Eq.~\eqref{eq:cm}. 

To show that \( \croof \) is a coherence measure in the BCP framework, we resort to the work by Du {\it et al}.~\cite{Du2015} and Zhu {\it et al}.~\cite{Zhu2017}:
They proved that 
\begin{equation}\label{seq:cf}
    C_f(\rho) := \min_\ensemble \sum_l p_l f(\mu(\psi_l))
\end{equation}
with \( \mu(\psi) := {(\lvert\braket{1|\psi}\rvert^2,\,\lvert\braket{2|\psi}\rvert^2,\,\ldots,\,\lvert\braket{n|\psi}\rvert^2)}^\mt \) satisfies the strong monotonicity and convexity in the BCP framework, as long as \(f\) is a real \emph{symmetric concave} function.
When we say \(f\) is symmetric, it means that \(f\) is invariant under any permutation of the elements of \( \mu \).
It is easy to see that \( \croof(\rho) \) is of the form Eq.~\eqref{seq:cf} with 
\begin{equation}\label{eq:fmu}
    f(\mu) = \frac4n \left(1-\sum_{j=1}^n \mu_j^2 \right),
\end{equation}
which is a real-valued, symmetric, and concave function
Therefore, \( \croof \) possesses the strong monotonicity and convexity in the BCP framework according to Ref.~\cite{Du2015,Zhu2017}.
Moreover, \( f(\mu) \) in Eq.~\eqref{eq:fmu} vanishes only when \( \mu \) is a sharp probability distribution such that one probability is unit and all others are zero.
As a result, \( \croof \) vanishes only when there exists an ensemble implementation of \( \rho \) such that all \( \ket{\psi_l} \)'s are in the reference basis, meaning that \( \rho \) is incoherent.
We thus have proved that \( \croof(\rho) \) is a coherence measure in the BCP framework.

Although the convex roof involved in \( \croof(\rho) \) is difficult to evaluate, we furthermore give an analytic result for 2-dimensional quantum systems (\ie{}, qubits). That is,
\begin{equation}\label{eq:qubit}
	\croof(\rho) = {(\tr\sigma_1\rho)}^2 + {(\tr\sigma_2\rho)}^2,	
\end{equation}
where \( \sigma_1 \) and \( \sigma_2 \) are the Pauli-\(x\) and -\(y\) matrices, respectively.
In order to prove Eq.~\eqref{eq:qubit}, we use Lemma~\ref{prop:invariance} to choose \( Z_1=\sigma_3/\sqrt2 \) and \( Z_2 = \openone / \sqrt2 \), where \(\sigma_3\) is the Pauli-\(z\) matrix.
We then get
\begin{align}
    \croof(\rho) 
    &= \min_\ensemble \sum_l p_l \left(
        \braket{\psi_l|\sigma_3^2|\psi_l} - \braket{\psi_l|\sigma_3|\psi_l}^2  
    \right) \nonumber \\
    &= \frac{1}{4} \qfim\left(e^{i\theta \sigma_3} \rho e^{-i\theta \sigma_3}\right),
\end{align}
where the last equality is due to the equivalence between the QFI and the convex roof of variance~\cite{Toth2013,Yu2013a}.
With the spectrum decomposition of the density operators \( \rho_\theta = \sum_\alpha \lambda_a \dyac \alpha \), it is known that the QFI can be given by~\cite{Paris2009}
\begin{align}\label{eq:explicit}
    \qfim(e^{i\theta \sigma_3} \rho e^{-i\theta \sigma_3})
    = \sum_{\alpha,\beta \mid \lambda_a+\lambda_b \neq 0} 
    \frac{2{(\lambda_\alpha-\lambda_\beta)}^2}{\lambda_\alpha+\lambda_\beta} 
    |\braket{\alpha|\sigma_3|\beta}|^2.
\end{align}
Write the density matrix in the Bloch representation
\( \rho=(\openone + \sum_{i=1}^3 r_i \sigma_i)/2 \), where \( r_i := \tr\sigma_i \rho \) is the \(i\)th component of the Bloch vector \(r\).
The eigenvalues and eigen-projections of the density matrix are \(\lambda_\pm = (1 \pm r)/2\) and
\begin{equation}
    \dyac \pm = \frac12 \left( 
        \openone \pm \sum_{i=1}^3 \frac{r_i \sigma_i}{|r|} 
    \right),
\end{equation}
respectively, where \( |r| := \sqrt{(r_1^2+r_2^2+r_3^2)} \) is the length of the Bloch vector \(r\).
Substituting these eigenvalue and eigenstates into Eq.~\eqref{eq:explicit}, we get 
\begin{align}
     \qfim(e^{i\theta \sigma_3} \rho e^{-i\theta \sigma_3}) 
    &= 4 r_1^2 + 4 r_2^2,
\end{align}
from which Eq.~\eqref{eq:qubit} immediately follows.

Moreover, we show that for any \(n\)-dimensional quantum system, \( \croof(\rho) \) is bounded as 
\begin{equation}
    0\leq \croof(\rho) \leq \frac{4 (n-1)}{n^2},
\end{equation}
and the upper bound is attained if \( \rho=\dyac{\psi}\) with \( \ket{\psi}=(1/\sqrt n)\sum_{j=1}^n \ket{j}\) up to arbitrary relative phases between \( \ket j\)'s.
The lower bound is obvious due to the non-negativity of the arithmetic-mean QFI\@.
The upper bound can be obtain from Eq.~\eqref{eq:cm} by noting that \( \sum_{j=1}^n \mu_j^2 \geq 1 / n \) for any probability distribution \( \{ \mu_j \} \) with  \( \mu_j = |\braket{j|\psi_l}|^2 \).

It is worthy to mention that Yu proposed in Ref.~\cite{Yu2017} a coherence measure that is analogous to \( \croof \) given in Eq.~\eqref{eq:cm} but using the Wigner-Yanase skew information instead of the QFI and its convex roof.
Yu also showed that reciprocal of the coherence measure thereof gives a lower bound on the harmonic-mean estimation error~\cite{Yu2017}.
Since the Wigner-Yanase skew information is not larger than QFI~\cite{Luo2004}, the reciprocal of \( \croof(\rho) \) in this work will give a tighter lower bound on the harmonic-mean estimation error than that given by the Wigner-Yanase skew information.
Besides, the convex roof of arithmetic-mean QFI has also been used by Kwon {\it et al.} in Ref.~\cite{Kwon2019} to quantify the non-classicality as a resource for quantum metrology.

\section{Conclusion}\label{sec:conclusion}
Summarizing, we have generalized the QCRB by introducing the concepts of \(f\)-mean estimation error and \(f\)-mean QFI\@.
We show that, analogous to the ordinary QCRB, the \(f\)-mean error of unbiased quantum estimation is bounded from below by the inverse of a corresponding reciprocal-\(f\)-mean QFI\@.
We have also refined the \(f\)-mean QCRB for complex QFI matrices given by the RLD approach.  
Our \(f\)-mean versions of QCRB can be used in practical application for optimization problems in multiparameter quantum metrology.
By applying our \(f\)-mean QCRBs on the scenario of estimating a complex coherent signal accompanied by background thermal light, we have demonstrated that the \(f\)-mean QCRBs can reveal more forbidden regions of error combinations than the ordinary one.
We hope the method developed in this work can help us better understand the fundamental quantum limit of multiparameter estimation. 

Moreover, we have showed that the emerged \(f\)-mean QFIs themselves can be considered as a class of information-theoretic quantities.
Like the ordinary QFI, the \(f\)-mean QFIs are monotonically non-increasing under quantum operations, which is an important property in quantum information theory.
We have demonstrated that the \(f\)-mean QFIs as well as its convex roof are useful for quantifying asymmetry and coherence in quantum resource theory. 
Considering the role of the \(f\)-mean QFI in quantum multiparameter estimation, the resource measured in such a manner can be interpreted as being valuable for the metrological purpose.

\begin{acknowledgments} 
This work is supported by 
the National Natural Science Foundation of China (Grant Nos.~61871162, 11805048, 11935012 
11571313, 
and 11734015), 
and the Natural Science Foundation of Zhejiang Province, China (Grant No.~LY18A050003). 
\end{acknowledgments}

\appendix

\section{Proof of the generalized QCRB}\label{app:proof}
We here prove the \(f\)-mean version of QCRB Eq.~\eqref{eq:fqcrb}.
The real-valued function \(f\) used in this work is supposed to be continuous, strictly monotonic, and either operator monotone or anti-monotone.
If \(f\) is operator monotone, it follows from the ordinary QCRB that
\begin{align}
    \ec \geq \qfim^{-1} 
    & \implies
    f(\ec) \geq f(\qfim^{-1}) \nonumber \\
    & \implies 
    \tr \wm f(\ec) \geq \tr \wm f(\qfim^{-1}), 
\end{align}
where the weight matrix \( \wm \) is real-symmetric and positive semi-definite.
Since \(f\) is continuous and strictly monotonic, its inverse function \(f^{-1}\) exists and must be monotonically increasing.
Therefore,
\begin{align}
	\mean_{f,G}(\ec) & = f^{-1}\left( \tr \wm f(\ec) \right) \nonumber \\
	& \geq f^{-1}\left( \tr \wm f(\qfim^{-1}) \right)  
	 = \frac1{\mean_{f\circ\zeta,G}(\qfim)},
\end{align}
where \( \zeta: x\mapsto 1/x \) is the reciprocal function.
On the other hand, if \(f\) is operator anti-monotone, then \(f\) and \( f^{-1} \) are both monotonically decreasing.
Therefore,
\begin{align}
    \ec &\geq \qfim^{-1} \implies
    f(\ec) \leq f(\qfim^{-1}) \nonumber \\
    & \implies 
	\tr \wm f(\ec) \leq \tr \wm f(\qfim^{-1}) \nonumber \\
	& \implies 
	f^{-1}\left( \tr \wm f(\ec) \right) \geq f^{-1}\left( \tr \wm f(\qfim^{-1}) \right).
\end{align}
We then still get the \(f\)-mean QCRB Eq.~\eqref{eq:fqcrb}. 

\section{Operator monotone function}\label{app:opmonotone}
We here give some concrete instances of operator monotone or anti-monotone functions. 
Remind that a function \( f:(a,b)\to \mathbb R \) is called operator monotone if \( A \geq B \) always implies \( f(A) \geq  f(B) \), where \( A \) and \( B \) are self-adjoint operators whose eigenvalues belongs to \( (a,b) \).
Similarly, \(f\) is called anti-monotone if \(A \geq B\) always implies \( f(A) \leq f(B )\).
The L\"{o}wner–Heinz inequality states that \( A \geq B \geq 0 \) implies \( A^s \geq B^s \) for all \( s \in (0,1]\).
Therefore, \( f:x\mapsto x^s\) with \( s \in [0,1]\) is an operator monotone on positive semi-definite matrices.
The function \( f: x \mapsto 1/x\) on \( (0,\infty) \) is operator anti-monotone, as 
\begin{align}
	& A\geq B>0
    \implies 
    B^{-1/2} A B^{-1/2} \geq \openone \nonumber \\
   	\implies &
    B^{1/2} A^{-1} B^{1/2} \leq \openone 
   	\implies
    A^{-1} \leq B^{-1}, 
\end{align}
where the second ``\( \implies \)'' can be seen by simultaneously diagonalizing \( B^{1/2} A^{-1} B^{1/2} \) and \( \openone \).
Combining the L\"ower-Heinze inequality with the operator anti-monotonicity of \( f: x \mapsto 1/x \), we can see that \( f:x\mapsto x^s\) for \( s \in [-1,0) \) are operator anti-monotones on \( (0,\infty) \).  
Besides, another important operator monotone function is the logarithm, for which the reader is directed to Ref.~\cite[Chapter 4]{Hiai2014}.

\section{Homogeneity of the generalized means}\label{app:homogeneity}

Remind that an \(f\)-mean is said to be homogeneous if 
\begin{equation}
	f^{-1}\Big(\sum_j p_j f(t \lambda_j)\Big) =tf^{-1}\Big(\sum_j p_j f(\lambda_j)\Big)
\end{equation}
holds for any \( t\in\mathbb R^+ \), \( \lambda_j \geq 0 \), and any probability distribution \( \{p_j\} \).
When \( f(x) = x^s \), it can be shown that
\begin{align}
    f^{-1}\big(\sum_j p_j f(t \lambda_j)\big)
    &= {\big( \sum_j p_j {(t \lambda_j)}^s \big)}^{1/s} \nonumber \\
    &= t {\big( \sum_j p_j \lambda_j^s \big)}^{1/s}.
\end{align}
When \( f(x) = \ln x \), we have
\begin{align}
      f^{-1}\Big(\sum_j p_j f(t \lambda_j)\Big)
    & = \exp \Big( \sum_j p_j \ln(t \lambda_j) \Big) \nonumber \\
    & = \exp \Big( \ln t \sum_j p_j + \sum_j p_j \ln\lambda_j \Big) \nonumber \\ 
    & = t \exp\Big( \sum_j p_j \ln\lambda_j \Big), 
\end{align}
where we have used \( \sum_j p_j = 1 \) in the last equality.
Therefore, we have shown that the \( f \) means for \( f:x\mapsto x^s\) and \( f:x\mapsto\ln x \) are homogeneous.

\end{document}